\newtheorem{theorem}{Theorem}
\newtheorem{lemma}[theorem]{Lemma}
\newtheorem{Lemma*}[theorem]{Lemma}
\newtheorem{remark}[theorem]{Remark}
\newtheorem{definition}[theorem]{Definition}
\newtheorem{example}[theorem]{Example}
\def\jdlqed{\vbox{\hrule \hbox{\vrule\hbox to
5pt{\vbox to 6pt{\vfil}\hfil}\vrule}\hrule}}
\newcommand{\R}{\mathbb R}
\def\comment#1{\textit{[#1]}}
\def\comment#1{}
\begin{document}

\title{Tropical neural networks and its applications to classifying phylogenetic trees}
\author{Ruriko Yoshida \and Georgios Aliatimis \and Keiji Miura}

\maketitle

\begin{abstract}
Deep neural networks show great success when input vectors are in an Euclidean space. However, those classical neural networks show a poor performance when inputs are phylogenetic trees, which can be written as vectors in the tropical projective torus. Here we propose tropical embedding to transform a vector in the tropical projective torus to a vector in the Euclidean space via the tropical metric. We introduce a tropical neural network where the first layer is a tropical embedding layer and the following layers are the same as the classical ones. We prove that this neural network with the tropical metric is a universal approximator and we derive a backpropagation rule for deep neural networks. 
Then we provide TensorFlow 2 codes for implementing a tropical neural network in the same fashion as the classical one, where the weights initialization problem is considered according to the extreme value statistics.
We apply our method to empirical data including sequences of hemagglutinin for influenza virus from New York.
Finally we show that a tropical neural network can be interpreted as a generalization of a tropical logistic regression.
\end{abstract}

\section{Introduction}
A neural network is a learning method, called deep learning, to learn data in a way to mimic a brain system, i.e., which interconnects nodes, called neurons, in a layered structure like a human brain system \cite{Goodfellow16, Calin20, Grohs23}.
Recent years deep neural networks show great success to process input data which lay in the Euclidean space \cite{Ford18}.
However, when input data are phylogenetic trees or the time series with trends, represented as vectors in the {\em tropical projective torus} \cite{Nye, NTWY, MLKY, KDETree, YZZ, Trop_KDE, 10.1093/bioinformatics/btaa564, YTMM, tang2020tropical}, classical neural networks show a poor performance. Therefore in this paper, we propose neural networks which process an input data as vectors over the tropical projective torus.  The tropical projective torus denoted by $\mathbb{R}^d/\mathbb{R}\mathbf{1}$ is the $d$-dimensional  real numbers, $\mathbb{R}^d$, mod by the vector with all ones, i.e., $\mathbf{1}:= (1, 1, \ldots, 1) \in \mathbb{R}^d$.  Over the tropical projective torus denoted by $\mathbb{R}^d/\mathbb{R}\mathbf{1}$, we define $x:=(x_1, x_2, \ldots, x_d) = (x_1+c, x_2+c, \ldots , x_d+c) \in \mathbb{R}^d/\mathbb{R}\mathbf{1}$ where $c \in \mathbb{R}$ \cite{MS}.  
Here we consider the {\em tropical metric}, also known as the {\em generalized Hilbert projective metric}, over the tropical projective torus as 
activation functions in a hidden layer of a neural network.
It is important to keep the invariance of the input vector under the one-vector which is innate in the tropical projective torus \cite{MS, joswigBook, SS, LY, Trop_HAR, firststep}. Our strategy is to embed an input vector in the tropical projective torus into a vector in the classical Euclidean space in the first layer. This is analogous to the word embedding in the field of natural language processing 
\cite{Vaswani17, onan2021sentiment}. Then the following layers can be the same as the classical ones. 

Although some previous works analyzed ReLU neural networks by using the tropical geometry,
the neural networks themselves are defined on a classical Euclidean space \cite{Zhang18, Alfarra22, Montúfar22}.
In this paper, on the other hand, we consider a tropical projective torus as an input space and keep the invariance under the one-vector. That is, our work is truly tropical.

In this paper, we first introduce a tropical embedding layer. We use the tropical embedding layer as the first layer of the classical neural networks to keep the invariance in the tropical projective space.
To check if this tropical neural network has enough flexibility, we next prove that this tropical neural network is a universal approximator.
Then we derive a backpropagation rule for the tropical neural networks.
We provide TensorFlow 2 codes for implementing a tropical neural network in the same fashion as the classical one, where the weights initialization problem is considered according to the extreme value statistics.
We show its applications to phylogenomics, a new field in evolutionary biology which applies tools from phylogenetic trees to genome data.
Applications includes simulated data under the multi-species coalescent model which is the most popular model to analyze gene tree analysis on a genome data \cite{mesquite}, and empirical data of influenza virus data set collected from the state of New York \cite{NTWY}.
Finally we briefly show that a tropical neural network can be interpreted as a generalization of a tropical logistic regression.

\section{Tropical Embedding for Tropical Neural Networks}
The classical neural networks only accept a input vector in an Euclidean space in its original form.
Thus they cannot accept a phylogenetic tree as an input since a space of phylogenetic trees is not Euclidean \cite{SS,AK,BHV}, for example.
Therefore, we first consider a tropical embedding layer, which is analogous to the word embedding in natural language processing \cite{Vaswani17, onan2021sentiment}.
Once a phylogenetic tree is embedded in an Euclidean space, a classical neural network can is applied to analyzing it.

\begin{definition}[tropical neural networks]
A tropical neural network is a network where a tropical embedding layer as the first hidden layer is followed by a classical neural network (classical layers).
\end{definition}

\begin{definition}[tropical embedding layer]
Let $x$ in $\mathbb{R}^d/\mathbb{R}\mathbf{1}$ be an input vector to the tropical embedding layer.
the activity of $j$-th neuron as an output of the tropical embedding layer is given by
\begin{equation}
z_j = \max_i(x_i + w^{(1)}_{ji}) - \min_i(x_i + w^{(1)}_{ji}).
\label{eq_activation_function}
\end{equation}
\end{definition}

\begin{remark}
Note that no activation function is executed for $z$ as ``max - min'' operation is somehow regarded as the activation function of the neurons in the first hidden layer.
\end{remark}

\begin{remark}
There is a geometric interpretation: ``max - min'' operation measures the distance between the points $x$ and $w^{(1)}_j$.
Therefore $z(x)$ is invariant along one vectors $\mathbf{1}$.
\end{remark}

\begin{remark}
There are alternative ways to attain the invariance such as
\begin{equation}
z_j = \max_i(x_i + w^{(1)}_{ji}) - \textrm{2nd} \max_i(x_i + w^{(1)}_{ji}).
\end{equation}
There is a geometric interpretation: ``max - 2nd max'' operation measures the distance between a point $x$ and the tropical hyperplane whose normal vector is $w^{(1)}$ \cite{joswigBook}.
Therefore $z(x)$ is invariant along one vectors $\mathbf{1}$.
You could even use $j$-th max in general.
However, the repertoire of functions never increase by using alternative ones.
That is, from the view point of universal approximation theorem, using Eq.~\eqref{eq_activation_function} suffices.
In addition, Eq.~\eqref{eq_activation_function} seems to perform better than the alternative ones according to our numerical experiments (not shown).
Therefore we solely use Eq.~\eqref{eq_activation_function} as a tropical embedding layer in what follows.
\end{remark}

\begin{remark}
    Suppose $A \in \mathbb{Z}_+^{N \times d}$.  Then we consider the ReLU such that
    \[
    \max\{Ax + b, 0\}.
    \]
    Assume that $A {\bf 1} \not = 0$.
    Suppose $x \in \mathbb{R}^d/\mathbb{R}{\bf 1}$.  Then we have $x := x + c\cdot (1, \ldots , 1) = x + c \cdot {\bf 1}\in \mathbb{R}^d/\mathbb{R}{\bf 1}$.
    Then for $c \ll 0$ and fixed $x$, we have:
    \[
    \begin{array}{ccl}
        \max\{Ax + cA{\bf 1} + b, 0\} & = & 0.\\
    \end{array}
    \]
    As $c \to -\infty$, we have
    \[
    \frac{1}{1+\exp (-\max\{Ax + cA{\bf 1} + b, 0\})} \to \frac{1}{1 + 1} = 1/2
    \]
    for any $x \in \mathbb{R}^d/\mathbb{R}{\bf 1}$.  
    Also for $c \gg 0$ and fixed $x$, we have:
    \[
    \begin{array}{ccl}
        \max\{Ax + cA{\bf 1} + b, 0\} & = & Ax + cA{\bf 1} + b.\\
    \end{array}
    \]
    As $c \to \infty$, we have
    \[
    \frac{1}{1+\exp (- (Ax + cA{\bf 1} + b))} \to 1
    \]
    for any $x \in \mathbb{R}^d/\mathbb{R}{\bf 1}$. 
    Therefore, neural networks with the ReLU cannot learn from observation in these cases.  
    However, with the activator function defined in Eq.~\eqref{eq_activation_function}, we have 
    \[
    \max_i(x_i + c\cdot {\bf 1}+ w^{(1)}_{ji}) - \min_i(x_i + c\cdot {\bf 1} + w^{(1)}_{ji}) = \max_i(x_i + w^{(1)}_{ji}) - 
    \min_i(x_i + w^{(1)}_{ji}).
    \]
\end{remark}

\begin{remark}
    Classical neural networks are 
    not well-defined in the tropical
    projective torus, since the
    neuron values are not invariant 
    under transformations 
    of the form $x \to x + (c,\dots,c)$. 
    Meanwhile, 
    the tropical embedding layer
    of Eq.~\eqref{eq_activation_function}
    is
    invariant under such transformations.
\end{remark}

\section{Universal Approximation Theorems for Tropical Neural Networks}
It is very important to check if the tropical embedding layer as in Eq.~\eqref{eq_activation_function} followed by classical layers has enough varieties to represent considerable input-output relations \cite{Calin20}.
In this section, we show that the tropical neural network can approximate enough variety of functions so that we can safely use it.

\begin{definition}
The norm $\| \cdot \|_q$ for $q \ge 1$ is defined by
\begin{equation}
\| f(x) \|_q = \int_{\mathbb{R}^n}|f(x)|^q dx
\end{equation}
The space $L^q(\mathbb{R}^d), (1 < q < \infty),$ is the set of Lebesgue integrable functions $f$ from $\mathbb{R}^d$ to $\mathbb{R}$ for which $\| f(x) \|_q < \infty$.
\end{definition}

\begin{definition}
The space $C^0(\mathbb{R}^d)$ is the set of continuous, compactly suppported functions from $\mathbb{R}^d$ to $\mathbb{R}$.
\end{definition}

\begin{remark}
Note that $C^0(\mathbb{R}^d) \subset L^q(\mathbb{R}^d)$.
\end{remark}

For the classical case, a universal approximation theorem for ReLU feedforward neural nerworks has been proved in \cite{Arora18}.

\begin{theorem}[classical universal approximation theorem \cite{Arora18}]
\label{universal_approximation_theorem1}
Any function of $x_j$ for $j=1,\ldots,d$ in $L^q (\mathbb{R}^d), (1 < q < \infty),$ can be arbitrarily well approximated in the $\| \cdot \|_q$ by a ReLU feedforward neural network with at most $L=2(\lfloor \log_2 d \rfloor + 2)$ layers. 
\end{theorem}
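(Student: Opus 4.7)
The plan is to combine a density argument in $L^q$ with the exact representability of continuous piecewise linear (CPWL) functions by ReLU networks, which is the substance of the cited result in \cite{Arora18}. First I would observe that since $C^0(\mathbb{R}^d)$ is dense in $L^q(\mathbb{R}^d)$ for $1<q<\infty$, it suffices to approximate any compactly supported continuous $f$ to arbitrary $\|\cdot\|_q$ precision. Fix a bounding cube for the support of $f$, take a simplicial subdivision of mesh $\delta$, and let $g_\delta$ be the piecewise linear interpolant of $f$ on the vertex set. Then $g_\delta$ is CPWL, converges uniformly (and hence in $\|\cdot\|_q$) to $f$ as $\delta \to 0$, so the problem reduces to exact representation of a CPWL function by a ReLU network.

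For the representation step, I would invoke the structural characterization that every CPWL function $g \colon \mathbb{R}^d \to \mathbb{R}$ admits a max--min normal form $g(x) = \max_i \min_j (a_{ij}^\top x + b_{ij})$ over finitely many affine pieces. Each $\max$ or $\min$ of $k$ inputs can be realized by a balanced binary tree of ReLU gates of depth $\lceil \log_2 k \rceil$, using the identities $\max(u,v) = \mathrm{ReLU}(u-v)+v$ and $\min(u,v) = -\max(-u,-v)$. Thus the CPWL function becomes a ReLU network whose depth is the sum of the depths of the outer max-tree and the inner min-tree.

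To obtain the sharp constant $L = 2(\lfloor \log_2 d \rfloor + 2)$, I would follow the induction on dimension in the Arora et al.\ construction: split the coordinate set, recursively build sub-networks on half-dimensional slices, and glue them with a max--min aggregator costing $\lfloor \log_2 d \rfloor + 2$ layers per side. The factor of $2$ in $L$ is then exactly the two alternating max and min stages, while the $+2$ absorbs the leaf-level affine layer and the final aggregation.

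The main obstacle is not the qualitative universal-approximation statement but this sharp logarithmic depth bound. A straightforward density-plus-representation argument yields approximation by ReLU networks of some finite depth that a priori grows with the number of affine pieces of $g_\delta$. Matching the constant $2(\lfloor \log_2 d \rfloor + 2)$ requires the careful balanced-tree construction together with a reduction that keeps depth independent of the piece count. Once that combinatorial engineering is in place, the $L^q$ conclusion follows by choosing $\delta$ sufficiently small so that $\|f - g_\delta\|_q < \varepsilon$.
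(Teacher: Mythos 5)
The paper itself offers no proof of this statement; it is quoted verbatim as a known result with a citation to Arora et al., so your attempt can only be measured against the argument in that reference. Your outer skeleton matches it: reduce to compactly supported continuous functions by density of $C^0(\mathbb{R}^d)$ in $L^q(\mathbb{R}^d)$, replace $f$ by a piecewise linear interpolant $g_\delta$ converging in $\|\cdot\|_q$, and then represent $g_\delta$ \emph{exactly} by a ReLU network, computing pairwise maxima via $\max(u,v)=\mathrm{ReLU}(u-v)+v$ arranged in balanced binary trees. All of that is sound and is the same reduction used in the cited work.

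The genuine gap is in the depth bound, and you have correctly identified where it is but not closed it. The lattice (max--min) normal form $g(x)=\max_i\min_j(a_{ij}^\top x+b_{ij})$ has inner and outer index sets whose sizes grow with the number of affine pieces of $g_\delta$, hence with $\delta^{-1}$; a balanced tree over those terms gives depth $O(\log(\text{number of pieces}))$, which blows up as $\delta\to 0$ and cannot yield a bound of the form $2(\lfloor\log_2 d\rfloor+2)$. The ingredient that rescues this in Arora et al.\ is the representation theorem of Wang and Sun: every continuous piecewise linear function on $\mathbb{R}^d$ can be written as a finite signed sum $\sum_j s_j\max\{\ell_{j,1},\dots,\ell_{j,k_j}\}$ with each $k_j\le d+1$ and $s_j\in\{\pm 1\}$. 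Because each summand is a max of at most $d+1$ affine functions, a binary max-tree of depth $\lceil\log_2(d+1)\rceil$ computes it, the summands are evaluated in parallel, and the final linear combination is absorbed into the output layer --- so the depth depends only on $d$, not on the piece count. Your proposed substitute (``split the coordinate set, recursively build sub-networks on half-dimensional slices, and glue them with a max--min aggregator'') is not the construction in the reference and is not developed far enough to verify; as written it does not explain why the number of terms being maximized or minimized at each stage is controlled by $d$ rather than by the mesh. To complete the proof you should either import the Wang--Sun decomposition as a lemma or prove an equivalent bound on the number of affine functions appearing in each max.
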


As the $d-1$ neurons in the tropical embedding layer can easily represent $(x_j - x_d)$ for $j=1,\ldots,d-1$ and Theorem \ref{universal_approximation_theorem1} can be applied to the second and later layers of a tropical neural network (that is equivalent to a classical neural network), we can prove the following theorem.

\begin{theorem}[tropical universal approximation theorem]
Any function of $(x_j - x_d)$ for $j=1,\ldots,d-1$ in $L^q(\mathbb{R}^d/\mathbb{R}\mathbf{1}) \simeq L^q(\mathbb{R}^{d-1}), (1 < q < \infty),$ can be arbitrarily well approximated in the $\| \cdot \|_q$ by a tropical neural network with at most $L=2(\lfloor \log_2 d \rfloor + 2)+1$ layers (which include an tropical embedding layer as the first layer).
\end{theorem}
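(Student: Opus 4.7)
The plan is to use the first (tropical embedding) layer to realize the $d-1$ invariant coordinates $(x_j-x_d)$ of the quotient $\mathbb{R}^d/\mathbb{R}\mathbf{1}\simeq\mathbb{R}^{d-1}$ as neuron outputs, up to additive constants, and then to invoke Theorem~\ref{universal_approximation_theorem1} in dimension $d-1$ on the ensuing classical ReLU layers. Because $L^q(\mathbb{R}^d/\mathbb{R}\mathbf{1})\simeq L^q(\mathbb{R}^{d-1})$, approximating $f$ reduces to a classical approximation problem once this feature extraction is in place.

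Concretely, given $f$ and $\varepsilon>0$, I would first approximate $f$ in $\|\cdot\|_q$ by a compactly supported $\tilde f$ with $\|f-\tilde f\|_q<\varepsilon/2$, and let $K\subset\mathbb{R}^{d-1}$ contain its support. For each $j\in\{1,\dots,d-1\}$ I set $w^{(1)}_{jj}=0$, $w^{(1)}_{jd}=-2M$, and $w^{(1)}_{ji}=-M$ for $i\notin\{j,d\}$, where $M$ is large enough, depending on the diameter of $K$, so that for any representative $x$ of a point whose image in $\mathbb{R}^{d-1}$ lies in $K$ the maximum $\max_i(x_i+w^{(1)}_{ji})$ is attained at $i=j$ and the minimum at $i=d$. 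A short case check then gives $z_j(x)=(x_j-x_d)+2M$ on $K$, manifestly invariant under $x\mapsto x+c\mathbf{1}$ and hence well-defined on the tropical projective torus.

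Next I apply Theorem~\ref{universal_approximation_theorem1} in dimension $d-1$ to the shifted target $g(y):=\tilde f(y_1-2M,\dots,y_{d-1}-2M)$, obtaining a ReLU feedforward network $\mathcal N$ with at most $L'=2(\lfloor\log_2(d-1)\rfloor+2)\le 2(\lfloor\log_2 d\rfloor+2)$ layers and $\|\mathcal N-g\|_q<\varepsilon/2$. Composing $\mathcal N$ with the tropical embedding layer yields a tropical neural network with $L'+1\le 2(\lfloor\log_2 d\rfloor+2)+1$ layers whose output at $x$ equals $\mathcal N(z_1(x),\dots,z_{d-1}(x))=\mathcal N(y+2M\mathbf{1})$, which approximates $g(y+2M\mathbf{1})=\tilde f(y)$ and hence $f$ to within $\varepsilon$ in $\|\cdot\|_q$.

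The main obstacle is that the clean identity $z_j=(x_j-x_d)+2M$ holds only on the chosen compact $K$; off $K$ the first-layer outputs are still convex piecewise linear, but no longer coincide with the linear coordinates, so one must control the $L^q$ error contributed by the complement of $K$. I would handle this by enlarging $K$ so that both the tail of $f$ and the tail of $\mathcal N$ (as an $L^q$ function) contribute less than a prescribed fraction of $\varepsilon$ in $\|\cdot\|_q$; this localization is the genuine content behind the authors' informal remark that the embedding ``easily represents'' $(x_j-x_d)$, the rest being a direct reduction to the classical universal approximation theorem.
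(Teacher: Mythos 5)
Your proposal follows essentially the same route as the paper: approximate $f$ in $\|\cdot\|_q$ by a compactly supported $g$, choose the first-layer weights so that on the support each embedding neuron outputs $z_j=(x_j-x_d)+\mathrm{const}$, and then invoke the classical ReLU universal approximation theorem (Theorem~\ref{universal_approximation_theorem1}) in dimension $d-1$ on the remaining layers, combining the two errors by the triangle inequality. The only substantive difference is that you explicitly flag and try to control the behaviour of the embedding layer off the compact set $K$, where $z_j$ no longer coincides with the linear coordinate --- a localization issue the paper's own proof passes over in silence --- so your version is, if anything, the more careful of the two.
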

\begin{proof}
For any $f \in L^q(\mathbb{R}^{d-1})$, $\exists g \in C_0(\mathbb{R}^{d-1})$ such that $\| f-g \|_q < \epsilon/2$ \cite{Calin20}.
Let $K$ be the support of $g$ and let $M$ be $\max_{x \in K} \|x\|$.
For $x \in K$, we can set $w^{(1)}_{jj} = -w^{(1)}_{jd} = 2M$ and $w^{(1)}_{ji} = 0$ for $i \neq j, d$ to obtain $z_j = x_j - x_d + 4M$ for $j=1,\ldots,d-1$.
This means that a neuron in the first tropical embedding layer can represent $x_j - x_d$.
Then $d-1$ neurons can represent $z_1$, $z_2$, \ldots, $z_{d-1}$.
Finally, simply apply Theorem \ref{universal_approximation_theorem1} to the classical neural network $F(z_1,\ldots,z_{d-1})$ consisting of the second and later layers of a tropical neural network to obtain $\| g-F \|_q < \epsilon/2$.
Taken together, $\| f-F \|_q < \| f-g \|_q + \| g-F \|_q < \epsilon$.
\end{proof}

There is another type of classical universal approximation theorems.

\begin{definition}
The width $d_m$ of a neural network is defined to be the maximal number of neurons in a layer.
\end{definition}

\begin{theorem}[classical universal approximation theorem for width-bounded ReLU networks \cite{Lu17}]
\label{universal_approximation_theorem2}
For any $f \in L^1 (\mathbb{R}^d)$ and any $\epsilon>0$, there exists a classical neural networks $F(x)$ with ReLU activation functions with width $d_m \le d + 4$ that satisfies
\begin{equation}
    \int_{\mathbb{R}^d} |f(x)-F(x)| dx < \epsilon .
\end{equation}
\end{theorem}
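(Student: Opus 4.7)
The plan is to follow the Lu--Pu--Wang--Hu--Wang strategy in three stages: reduce the problem to approximating indicator functions of axis-aligned cuboids, build a narrow ReLU network for each such indicator, and then concatenate the sub-networks while carefully reusing neurons so that the width never exceeds $d+4$.

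First, by the density of simple functions in $L^1(\mathbb{R}^d)$, pick $s = \sum_{k=1}^K c_k \mathbf{1}_{R_k}$ with disjoint open cuboids $R_k = (a_1^k,b_1^k)\times\cdots\times(a_d^k,b_d^k)$ and $\|f-s\|_1 < \epsilon/2$. It then suffices to approximate each $\mathbf{1}_{R_k}$ by a width-bounded ReLU network with small $L^1$-error and sum the results with the appropriate scalars $c_k$.

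Second, for a single cuboid $R$, I would construct a network whose layers carry $d+4$ neurons organized into three groups. The first group of $d$ neurons are passthrough channels that reproduce the input coordinates $x_1,\dots,x_d$ at every layer via the identity $x_i = \mathrm{ReLU}(x_i+M) - M$, valid on the bounded region on which the approximation is needed. One further neuron accumulates the partial output. The remaining three neurons compute a coordinate-wise trapezoidal bump built from terms of the form $\mathrm{ReLU}(x_i - a_i + \eta) - \mathrm{ReLU}(x_i - a_i) - \mathrm{ReLU}(x_i - b_i) + \mathrm{ReLU}(x_i - b_i - \eta)$ for small $\eta > 0$, processed coordinate by coordinate and aggregated through a running minimum (encoded by ReLU identities like $\min(a,b) = a - \mathrm{ReLU}(a-b)$). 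This yields a width-$(d+4)$ network approximating $\mathbf{1}_R$ in $L^1$ to any prescribed precision $\delta$.

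Third, I would chain the $K$ sub-networks in series. After the $k$-th block finishes, its weighted contribution $c_k F_{R_k}(x)$ is added to the accumulator neuron, the three computation channels are reset to zero, the passthrough channels still carry $x$, and the next block starts. Choosing the per-cuboid error below $\epsilon/\big(2\sum_k |c_k|\big)$ gives $\|s-F\|_1 < \epsilon/2$, and combining with the first step yields $\|f-F\|_1 < \epsilon$. The main obstacle I anticipate is the tight width bookkeeping: I must verify that the passthrough coordinates can be recomputed at every depth while a cuboid bump is being assembled without ever exceeding the budget of $d+4$ channels, and that zeroing out the computation channels between successive blocks can itself be done by ReLU operations already counted in the budget. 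Everything else reduces to a routine $L^1$ approximation argument.
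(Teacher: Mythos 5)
The paper does not prove this theorem at all: it is imported verbatim from Lu et al.\ \cite{Lu17} and used as a black box in the subsequent tropical corollary, so there is no in-paper argument to compare against. Your sketch is a faithful reconstruction of the original Lu--Pu--Wang--Hu--Wang proof --- reduction to simple functions over disjoint cuboids, a width-$(d+4)$ block per cuboid with $d$ passthrough channels, one accumulator, and three scratch neurons building a coordinate-wise trapezoid aggregated by $\min(a,b)=a-\mathrm{ReLU}(a-b)$, then serial concatenation --- and at the level of a sketch it is essentially correct. Two details deserve explicit attention beyond the width bookkeeping you already flag. First, the single accumulator neuron is a ReLU unit, so it cannot store a negative running sum $\sum_{k'\le k} c_{k'}F_{R_{k'}}(x)$ when some $c_k<0$; you need either to carry the sum with a large positive offset that is removed in the final affine output layer, or to accumulate positive and negative parts separately (the former keeps the width at $d+4$, the latter would cost an extra channel). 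Second, for the global $L^1$ bound you should note that when some $x_i<-M$ the passthrough saturates at the constant $-M$, which lies outside every interval $[a_i^k-\eta,\,b_i^k+\eta]$, so the corresponding trapezoid vanishes and the network output remains compactly supported; without this observation the error integral over the unbounded complement of the passthrough region is not obviously finite. With those two points supplied, the argument closes.
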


Again, as the $d-1$ neurons in the tropical embedding layer can easily represent $(x_j - x_d)$ for $j=1,\ldots,d-1$ and Theorem \ref{universal_approximation_theorem2} can be applied to the second and later layers of a tropical neural network (that is equivalent to a classical neural network), we can prove the following theorem.

\begin{theorem}[tropical universal approximation theorem with bounded width]
For any function $f$ of $(x_j - x_d)$ for $j=1,\ldots,n-1$ in $L^1(\mathbb{R}^d/\mathbb{R}\mathbf{1}) \simeq L^1(\mathbb{R}^{d-1})$ and any $\epsilon>0$, there exists a tropical neural networks $F(x)$ with width $d_m \le d + 4$ that satisfies
\begin{equation}
    \int_{\mathbb{R}^{d-1}} |f(x)-F(x)| dx < \epsilon .
\end{equation}
\end{theorem}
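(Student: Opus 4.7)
The plan is to run the same two-step argument used for the previous tropical universal approximation theorem, but replacing the appeal to Theorem \ref{universal_approximation_theorem1} with an appeal to the width-bounded classical result, Theorem \ref{universal_approximation_theorem2}. First I would reduce to compactly supported continuous approximants: since $C_0(\mathbb{R}^{d-1})$ is dense in $L^1(\mathbb{R}^{d-1})$, pick $g \in C_0(\mathbb{R}^{d-1})$ with $\|f-g\|_1 < \epsilon/2$, let $K$ denote its (compact) support, and set $M = \max_{x \in K} \|x\|$.

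Next I would build the tropical embedding layer exactly as before. Using $d-1$ neurons, I choose weights $w^{(1)}_{jj} = 2M$, $w^{(1)}_{jd} = -2M$, and $w^{(1)}_{ji}=0$ for $i \neq j,d$, so that for every $x \in K$,
\begin{equation*}
z_j \;=\; \max_i(x_i + w^{(1)}_{ji}) - \min_i(x_i + w^{(1)}_{ji}) \;=\; x_j - x_d + 4M, \qquad j=1,\dots,d-1.
\end{equation*}
Thus the first layer produces a shifted copy of the tropical coordinate representative $(x_1-x_d,\dots,x_{d-1}-x_d)$ in $\mathbb{R}^{d-1}$, and its width is $d-1$.

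Then I would apply Theorem \ref{universal_approximation_theorem2} to the function $\tilde g(z_1,\ldots,z_{d-1}) := g(z_1-4M,\ldots,z_{d-1}-4M)$, which lies in $L^1(\mathbb{R}^{d-1})$ with the same $L^1$ norm as $g$. The theorem, applied in dimension $d-1$, produces a classical ReLU network $\tilde F$ of width at most $(d-1)+4 = d+3$ with $\|\tilde g - \tilde F\|_1 < \epsilon/2$. Composing $\tilde F$ onto the outputs of the tropical embedding layer yields a tropical neural network $F$ with $\|f-F\|_1 \le \|f-g\|_1 + \|\tilde g - \tilde F\|_1 < \epsilon$. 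Its overall width is $\max(d-1,\,d+3) = d+3 \le d+4$, as required.

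There is no real obstacle here; the only points that require a modicum of care are the density step (which forces the use of $C_0$ since $L^1$ sits outside the $1<q<\infty$ range of the earlier theorem) and the book-keeping of the constant shift $4M$, which is harmless because it can be absorbed into the biases of the first classical layer fed by the embedding. The width accounting is the only genuinely new ingredient, and since the embedding layer needs only $d-1$ neurons, it fits comfortably inside the budget $d+4$.
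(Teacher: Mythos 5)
Your proposal is correct and follows essentially the same route as the paper: approximate $f$ by a compactly supported continuous $g$, use $d-1$ embedding neurons with $w^{(1)}_{jj}=-w^{(1)}_{jd}=2M$ to produce $z_j = x_j - x_d + 4M$ on the support of $g$, and then invoke the width-bounded classical theorem for the remaining layers. Your explicit handling of the $4M$ shift and the width bookkeeping ($\max(d-1,\,d+3)=d+3\le d+4$) is slightly more careful than the paper's, but the argument is the same.
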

\begin{proof}
For any $f \in L^1(\mathbb{R}^{d-1})$, $\exists g \in C_0(\mathbb{R}^{d-1})$ such that $\| f-g \|_1 < \epsilon/2$ \cite{Calin20}.
Let $K$ be the support of $g$ and let $M$ be $\max_{x \in K} \|x\|$.
For $x \in K$, we can set $w^{(1)}_{jj} = -w^{(1)}_{jd} = 2M$ and $w^{(1)}_{ji} = 0$ for $i \neq j, d$ to obtain $z_j = x_j - x_d + 4M$ for $j=1,\ldots,d-1$.
This means that a neuron in the first tropical embedding layer can represent $x_j - x_d$.
Then $d-1$ neurons can represent $z_1$, $z_2$, \ldots, $z_{d-1}$.
Finally, simply apply Theorem \ref{universal_approximation_theorem2} to the classical neural network $F(z_1,\ldots,z_{d-1})$ consisting of the second and later layers of a tropical neural network to obtain $\| g-F \|_q < \epsilon/2$.
Taken together, $\| f-F \|_q < \| f-g \|_q + \| g-F \|_q < \epsilon$.
\end{proof}

\section{Backpropagation Rule for Simplest Tropical Neural Networks}

\begin{figure}[h]
 \centering
 \includegraphics[width=\textwidth]{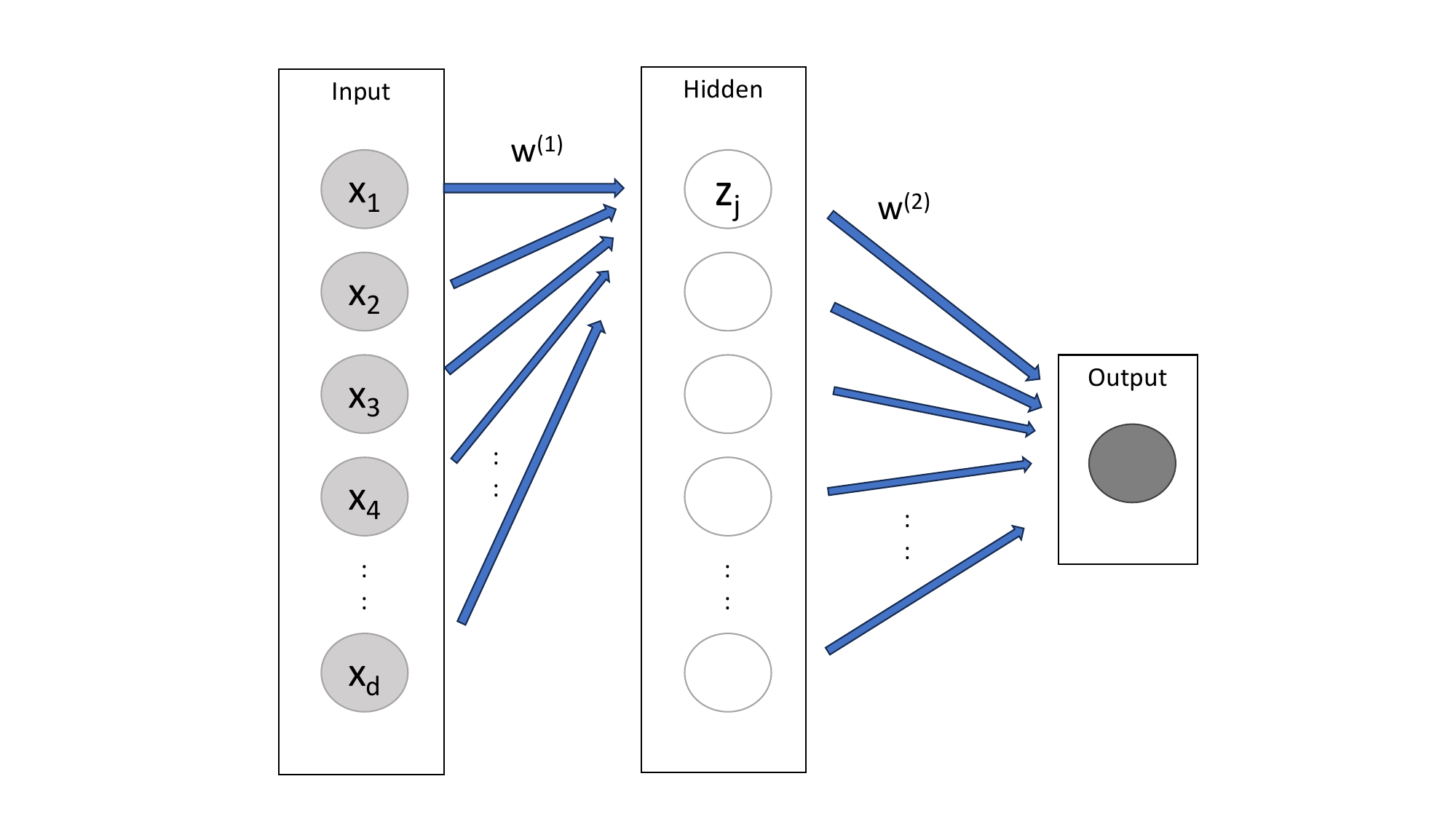}
 \caption{architecture of a simplest neural networks that accept a vector in $\mathbb{R}^d/\mathbb{R}\mathbf{1}$}
 \label{fig_design_simplest}
\end{figure}

Here we demonstrate that the gradients of the loss function with respect to weights exist for tropical neural networks.
The gradient is computable through the chain rule for differentials called backpropagation rule in the similar fashion to the classical case.
The gradients obtained in this way can guarantee the successful update of the weights at each iteration of learning.

We consider a simplest three layer network whose weights in the first and the second layers are denoted as $w^{(1)} \in \mathbb{R}^{d \times N}$ and $w^{(2)} \in \mathbb{R}^{N \times 1}$.
Suppose the activity in the first hidden layer is given by Eq.~\ref{eq_activation_function} and the output of the network is given as
\begin{equation} \label{eq:response}
y = \sum_j^N w^{(2)}_j z_j .
\end{equation}
Note that although here we derive a backpropagation rule for this regression setting just for simplicity, the backpropagation rule can be derived in a similar manner for the classification setting with a sigmoid function, too.

Below is the summary of parameters of the neural network.
\begin{itemize}
  \item $w^{(1)}$, $w^{(2)}$: weights in the first and second layers;
  \item $z_j$: activation of $j$-th neuron in the hidden layer; and
  \item $y$: activation of the output neuron.
\end{itemize}

\begin{theorem}
The partial derivatives of the cost function $Q:=\frac{1}{2}(y-y^{\textrm{true}})^2$ with respect to weights for the above tropical neural network $y=f(x)$ are given by
\begin{equation}
\frac{\partial Q}{\partial w^{(1)}_{ji}} = (y-y^{\textrm{true}}) w^{(2)}_j (\delta(i=i^\textrm{max}_j)-\delta(i=i^\textrm{min}_j)),
\label{eq_gradients}
\end{equation}
where $i^\textrm{max}_j$ (or $i^\textrm{min}_j$) is the index $i$ for which $(x_i + w^{(1)}_{ji})$ takes the maximum (or minimum) and
\begin{equation}
\frac{\partial Q}{\partial w^{(2)}_j} = (y-y^{\textrm{true}})z_j.
\end{equation}
\end{theorem}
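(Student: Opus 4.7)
The plan is to apply the chain rule twice: first decompose $\partial Q / \partial w$ through $y$, and then for the first-layer weights further decompose $\partial y / \partial w^{(1)}_{ji}$ through $z_j$. Everything is elementary calculus once one is careful about where the $\max$ and $\min$ operations are differentiable.

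First I would observe that $Q = \tfrac{1}{2}(y-y^{\text{true}})^2$ gives $\partial Q/\partial y = y - y^{\text{true}}$. For the second-layer weights, since $y=\sum_k w^{(2)}_k z_k$ is linear in $w^{(2)}$, one has $\partial y/\partial w^{(2)}_j = z_j$, so the chain rule immediately yields $\partial Q/\partial w^{(2)}_j = (y-y^{\text{true}})z_j$. This disposes of the second identity in one line.

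For the first-layer weights, note that $y$ depends on $w^{(1)}_{ji}$ only through $z_j$, and $\partial y/\partial z_j = w^{(2)}_j$ by linearity of equation~\eqref{eq:response}. Thus it suffices to compute $\partial z_j/\partial w^{(1)}_{ji}$, where $z_j = \max_k(x_k + w^{(1)}_{jk}) - \min_k(x_k + w^{(1)}_{jk})$. At any point where the maximizer $i^{\max}_j$ and minimizer $i^{\min}_j$ are each uniquely attained, a sufficiently small perturbation of $w^{(1)}_{ji}$ preserves the identity of these extremizers, so locally
\begin{equation*}
z_j = \bigl(x_{i^{\max}_j} + w^{(1)}_{j,i^{\max}_j}\bigr) - \bigl(x_{i^{\min}_j} + w^{(1)}_{j,i^{\min}_j}\bigr),
\end{equation*}
which is a smooth (in fact affine) function of $w^{(1)}_{ji}$ with derivative equal to $+1$ if $i = i^{\max}_j$, $-1$ if $i = i^{\min}_j$, and $0$ otherwise; this is exactly $\delta(i=i^{\max}_j) - \delta(i=i^{\min}_j)$. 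Combining the three chain-rule factors delivers equation~\eqref{eq_gradients}.

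The main (and really only) obstacle is the measure-zero set of degenerate configurations where the max or the min is attained by more than one index, on which $z_j$ is only piecewise smooth and the classical derivative does not exist. I would handle this by noting that on this null set the formula in equation~\eqref{eq_gradients} should be interpreted as a selection from the Clarke subdifferential of $z_j$, which is the standard convention used for ReLU-type activations in deep learning and is sufficient for the stochastic gradient updates employed in practice. Apart from this caveat, the argument is a direct two-step chain rule.
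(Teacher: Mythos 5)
Your proof is correct and is exactly the ``direct calculation'' that the paper's one-line proof alludes to: a two-step chain rule through $y$ and $z_j$, with the locally affine representation of $z_j$ near points where the extremizers are unique. Your explicit treatment of the degenerate (measure-zero) configurations via the Clarke subdifferential is a caveat the paper omits entirely, and it is a welcome addition rather than a deviation.
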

\begin{proof}
Direct calculations.
\end{proof}

\begin{example}
\label{ex_delta}
As a simplest example of Eq.~\eqref{eq_gradients}, let us consider the three dimensional input case $(d=3)$.
Suppose the number of neurons in the middle layer is one and its activity is z, for simplicity.
Assume $x_1 = 1$, $x_2 = 2$ and $x_3 = 3$ and $w^{(1)}_1 = w^{(1)}_2 = w^{(1)}_3 = 0$.
Then $(x_i + w^{(1)}_1) < (x_i + w^{(1)}_2) < (x_i + w^{(1)}_3)$
and $i_\textrm{max}=3$ and $i_\textrm{min}=1$.
Therefore,
\begin{equation}
\frac{\partial Q}{\partial w^{(1)}_i} = \left\{
\begin{array}{ll}
  -(y-y^{\textrm{true}}) w^{(2)} & (i=1) \\
  0 & (i=2)\\
  (y-y^{\textrm{true}}) w^{(2)} & (i=3).
\end{array}
\right.
\end{equation}
In this case we have $z=3-1=2$
If, furtheremore, $w^{(2)}=1$, then, $y=w^{(2)}z=2$ and
\begin{equation}
\frac{\partial Q}{\partial w^{(1)}_i} = \left\{
\begin{array}{ll}
  -(2-y^{\textrm{true}}) , & (i=1) \\
  0 , & (i=2)\\
  2-y^{\textrm{true}} . & (i=3),
\end{array}
\right.
\end{equation}
$w^{(1)}_i$ can be, for example, updated by the SGD rule: $\Delta w^{(1)}_{i} = - \eta \frac{\partial Q}{\partial w^{(1)}_i}$, where $\eta > 0$ is a learning rate.
Then, $w^{(1)}_3$ increases (and $w^{(1)}_1$ decreases) if $2>y^{\textrm{true}}$.
\end{example}

\begin{remark}
It is interesting that only two of $w^{(1)}_{i}$ are modified while the others remains.
Note that $\Delta w^{(1)}$ is orthogonal to the one vector, $\mathbf{1}:= (1, 1, \ldots, 1) \in \mathbb{R}^d$.
It is interesting to elucidate how this learning rule works as a dynamical system.
\end{remark}

\section{TensorFlow2 Codes for Tropical Neural Networks}
In order to boost computing with GPUs, we implement tropical neural networks in TensorFlow 2 \cite{Chollet21}.
As is the case for the classical neural networks, the auto-differential is the key for the GPU implementation of tropical neural networks.
In order to guarantee fast auto-differentials, all the calculations must be implemented only with the math functions in TensorFlow2 such as top\_k($v$,$d$), which returns the maximum and the minimum of a vector $v$.

In practice, it is essential to create a user-friendly class for the tropical embedding as the first layer of the tropical neural networks, that is scalable for big data.
The following code defines a hand-made class called TropEmbed(), which enables us to easily implement the tropical neural networks in the Keras/Tensorflow style.

\begin{lstlisting}[basicstyle=\ttfamily\footnotesize, frame=single]
class TropEmbed(Layer):
    def __init__(self, units=2, input_dim=3):
        super(TropEmbed, self).__init__()
        self.w = self.add_weight(shape=(units, input_dim), \
                initializer="random_normal")
        self.units = units
        self.input_dim = input_dim

    def call(self, x):
        x_reshaped = tf.reshape(x,[-1, 1, self.input_dim])
        Bcast = repeat_elements(x_reshaped, self.units, 1)
        val, i = tf.math.top_k(Bcast + self.w, self.input_dim)
        return val[:,:,0] - val[:,:,-1]

# usage
model = Sequential([TropEmbed(10, d), Dense(1)])
\end{lstlisting}
%
%
%

The codes for TropEmbed() class and for reproducing all the figures in this paper are available at \url{https://github.com/keiji-miura/TropicalNN}.

\section{Weight Initialization Based on Extreme Value Statistics}
Weight initializations are important for avoiding the divergence and vanishment of neural activities after propagating many layers.
For the classical neural networks, Xavier's and He's initializations are famous \cite{Glorot10, He16}.
Here we consider a tropical analogue.

\begin{definition}[Generalized Hilbert Projective Metric]
\label{eq:tropmetric}
For any points $v:=(v_1, \ldots , v_d), \, w := (w_1, \ldots , w_d) \in \mathbb R^d \!/\mathbb R {\bf 1}$,  the {\em tropical distance} (also known as {\em tropical metric}) $d_{\rm tr}$ between $v$ and $w$ is defined as:
\begin{equation*}
d_{\rm tr}(v,w)  := \max_{i \in \{1, \ldots , d\}} \bigl\{ v_i - w_i \bigr\} - \min_{i \in \{1, \ldots , d\}} \bigl\{ v_i - w_i \bigr\}.
\end{equation*}
\end{definition}

\begin{lemma}
\label{lm:weight}
Suppose $x_i, w_i \sim N(0,1)$ for $i=1,\ldots,d$. Then the expectation and variance of $d_{\rm tr}(x,-w)$ can be approximated by $2 \sqrt{2} (a_d \gamma + b_d)$ and $\frac{\pi^2}{3 \log d}$, respectively, where $a_d =  \frac{1}{\sqrt{2 \log d}}$ and $b_d = \sqrt{2 \log d} - \frac{\log \log d + \log (4\pi)}{2\sqrt{2 \log d}}$.
\end{lemma}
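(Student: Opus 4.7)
The plan is to reduce $d_{\rm tr}(x,-w)$ to a scaled range of i.i.d.\ standard normals and then apply classical extreme value theory. First I would observe that by Definition \ref{eq:tropmetric},
\begin{equation*}
d_{\rm tr}(x,-w) = \max_i (x_i + w_i) - \min_i(x_i + w_i).
\end{equation*}
Since $x_i+w_i$ is a sum of two independent standard normals, $x_i + w_i \stackrel{d}{=} \sqrt{2}\,Z_i$ with $Z_1,\dots,Z_d$ i.i.d.\ $N(0,1)$. Hence $d_{\rm tr}(x,-w) \stackrel{d}{=} \sqrt{2}\,R_d$, where $R_d := \max_i Z_i - \min_i Z_i$ is the sample range.

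Next I would invoke the Fisher–Tippett–Gnedenko theorem in its form for the Gaussian distribution: with $a_d = 1/\sqrt{2\log d}$ and $b_d = \sqrt{2\log d} - (\log\log d + \log(4\pi))/(2\sqrt{2\log d})$, the normalized maximum $(\max_i Z_i - b_d)/a_d$ converges in distribution to a standard Gumbel variable $G_1$ with $\mathbb{E}[G_1]=\gamma$ and $\mathrm{Var}(G_1)=\pi^2/6$. By the symmetry $Z_i \stackrel{d}{=} -Z_i$, the same theorem applied to $(-Z_i)_{i=1}^d$ gives $(-\min_i Z_i - b_d)/a_d \Rightarrow G_2$, another standard Gumbel. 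Therefore, to the order of approximation of the theorem,
\begin{equation*}
R_d \approx 2 b_d + a_d(G_1 + G_2).
\end{equation*}

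From here I would compute the moments directly. Taking expectations gives $\mathbb{E}[R_d] \approx 2 b_d + 2 a_d \gamma$, and multiplying by $\sqrt{2}$ recovers $\mathbb{E}[d_{\rm tr}(x,-w)] \approx 2\sqrt{2}(a_d \gamma + b_d)$. For the variance I would use the classical fact that the sample maximum and minimum of an i.i.d.\ Gaussian sample are asymptotically independent (so $G_1$ and $G_2$ may be taken independent), which yields $\mathrm{Var}(R_d) \approx a_d^2 \cdot \pi^2/3 = \pi^2/(6 \log d)$, and hence $\mathrm{Var}(d_{\rm tr}(x,-w)) \approx 2 \cdot \pi^2/(6 \log d) = \pi^2/(3 \log d)$, as claimed.

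The main obstacle is really a matter of rigor rather than calculation: one must justify that the asymptotic expansions in the Gumbel limit carry over to approximations of the first two moments (uniform integrability of $((\max_i Z_i - b_d)/a_d)^2$ is the standard ingredient), and one must cite or prove the asymptotic independence of the max and min for Gaussian samples. Both are known results in extreme value theory, so the argument is essentially bookkeeping once these are invoked; the lemma is intended as an asymptotic approximation in $d$, and I would state explicitly that the ``$\approx$'' signs refer to leading-order behavior as $d \to \infty$.
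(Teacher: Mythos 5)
Your proposal is correct and follows essentially the same route as the paper's proof: reduce to the range of i.i.d.\ Gaussians, apply the Gumbel limit with the stated normalizing constants $a_d, b_d$, use the symmetry of the normal to handle the minimum, and invoke asymptotic independence of the max and min to get the variance. Your version is slightly more careful in flagging the two points the paper glosses over (uniform integrability for moment convergence and the justification of the vanishing covariance, which the paper simply ``assumes''), but the substance is identical.
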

\begin{proof}
As $x_i + w_i \sim N(0,2)$, $Z := \frac{\max\{x+w\}/\sqrt{2}-b_d}{a_d}
\sim Gumbel(0,1)$ as $d \to \infty$.
Therefore, $\mathrm{Ex}[d_{\rm tr}(x,-w)] = \mathrm{Ex}[2\max\{x+w\}] \xrightarrow[d\to\infty]{} 2 \sqrt{2}( a_d \mathrm{Ex}[Z] + b_d )$.
$\mathrm{Var}[d_{\rm tr}(x,-w)] = \mathrm{Var}[\max\{x+w\}-\min\{x+w\}] = 2 \mathrm{Var}[\max\{x+w\}] + 2 \mathrm{Cov}[\max\{x+w\},-\min\{x+w\}] 
\xrightarrow[d\to\infty]{}
2 \times 2 a_d^2 \mathrm{Var}[Z]
= 2 \times 2 a_d^2 \frac{\pi^2}{6}$ where $\mathrm{Cov}[\max\{x+w\},-\min\{x+w\}] \xrightarrow[d\to\infty]{} 0$ was assumed.
\end{proof}

Here we confirm that the above scaling holds actually by numerical calculations.

\begin{figure}[h]
 \centering
 \includegraphics[scale=0.4]{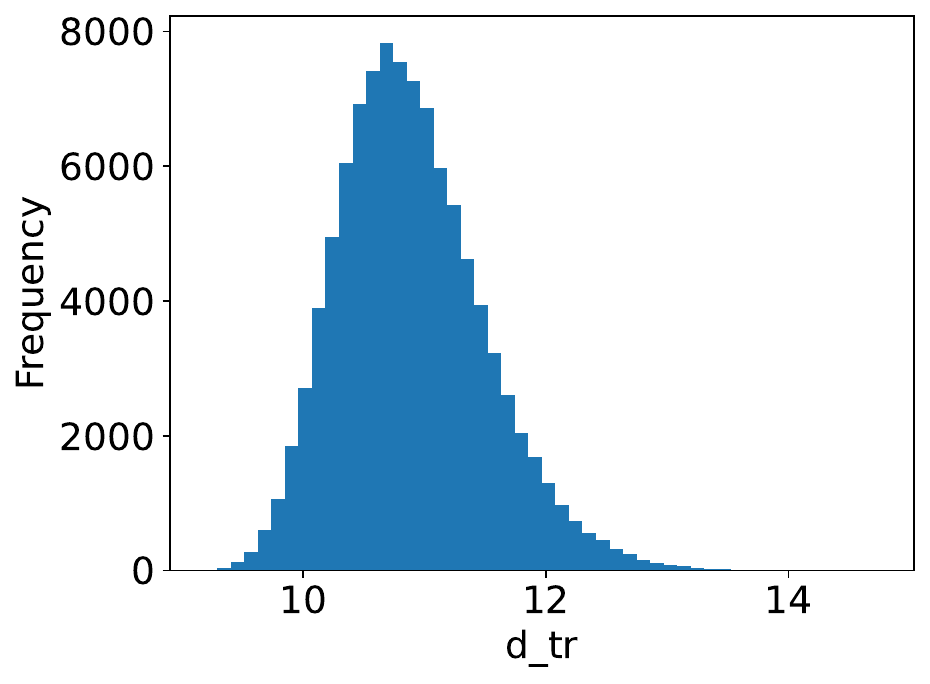}
 \caption{histogram of simulated $d_{\rm tr}(x,-w)$ for the same situation as in Lemma \ref{lm:weight} with $d=10000$. For the histogram, $100000$ samples of $d_{\rm tr}(x,-w)$ are used. The simulated mean is $10.893$ while the theoretical prediction is $10.954$. The simulated std is $0.604$ while the theoretica prediction is $0.598$. The mean and std are the same as predicted from the theory.}
 \label{fig_hist_dtr}
\end{figure}

One way for better weight initialization is to choose the scale of $w$ so that the variance of the neural activity in the embedding layer becomes $1$.

\begin{lemma}
Suppose $x_i \sim N(0,1)$ and $w_i \sim N(0,\frac{6 \log d}{\pi^2} -1)$ for $i=1,\ldots,d$.
Then the expectation and variance of $d_{\rm tr}(x,-w)$ can be approximated by
$2 \sqrt{\frac{6 \log d}{\pi^2}} (a_d \gamma + b_d)$ and $1$, respectively.
\end{lemma}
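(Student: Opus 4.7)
The plan is to reduce this lemma directly to Lemma~\ref{lm:weight} via a rescaling. First, by independence of $x_i$ and $w_i$, the sum $x_i + w_i$ is normal with variance $\sigma^2 = 1 + \bigl(\frac{6 \log d}{\pi^2} - 1\bigr) = \frac{6 \log d}{\pi^2}$ — the variance of $w_i$ is tuned precisely so that $\sigma^2$ grows logarithmically. After dividing by $\sigma$, the quantities $\{(x_i + w_i)/\sigma\}_{i=1}^d$ are i.i.d.\ standard normals, so the classical extreme-value theorem for Gaussians gives $\frac{\max_i(x_i+w_i)/\sigma - b_d}{a_d} \Rightarrow \mathrm{Gumbel}(0,1)$ with exactly the same $a_d, b_d$ as in Lemma~\ref{lm:weight}.

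Next I would compute the two moments by the same route as the previous proof, carrying the factor $\sigma$ through. For the expectation, symmetry of centered normals yields $\mathrm{Ex}[d_{\rm tr}(x,-w)] = 2\,\mathrm{Ex}[\max_i(x_i+w_i)]$, and the Gumbel limit gives $2\sigma(a_d\gamma + b_d) = 2\sqrt{\frac{6\log d}{\pi^2}}(a_d\gamma + b_d)$, matching the claim. For the variance, I would reuse the decomposition $\mathrm{Var}[\max - \min] = 2\,\mathrm{Var}[\max] + 2\,\mathrm{Cov}[\max, -\min]$ from Lemma~\ref{lm:weight}, together with the same asymptotic decorrelation $\mathrm{Cov}[\max, -\min] \to 0$. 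Substituting $\mathrm{Var}[\max] \sim \sigma^2 a_d^2 \pi^2/6$ with $\sigma^2 = \frac{6\log d}{\pi^2}$ and $a_d^2 = \frac{1}{2\log d}$ collapses the product to exactly $1$ — which explains the specific choice of $w_i$'s variance as $\frac{6\log d}{\pi^2} - 1$.

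The main obstacle — shared with Lemma~\ref{lm:weight} — is the rigorous justification of the asymptotic decorrelation of the maximum and minimum of i.i.d.\ Gaussians, together with upgrading convergence in distribution to convergence of the first two moments (uniform integrability of the rescaled maximum and its square). Both facts are classical in extreme-value theory but are taken here as given; once granted, the remainder of this lemma is pure bookkeeping of the $\sigma$ rescaling relative to the predecessor.
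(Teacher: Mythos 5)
Your proposal is correct and follows essentially the same route as the paper: rescale $x_i+w_i\sim N(0,\tfrac{6\log d}{\pi^2})$ to standard normals, apply the Gumbel limit with the same $a_d,b_d$, use symmetry to get $\mathrm{Ex}[\max-\min]=2\,\mathrm{Ex}[\max]$, and use the decomposition $\mathrm{Var}[\max-\min]=2\,\mathrm{Var}[\max]+2\,\mathrm{Cov}[\max,-\min]$ with the covariance assumed to vanish. The gaps you flag (moment convergence from distributional convergence, and the max--min decorrelation) are present in the paper's proof as well, which likewise takes them as given.
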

\begin{proof}
As $x_i + w_i \sim N(0,\frac{6 \log d}{\pi^2})$, $Z := \frac{\max\{x+w\}/\sqrt{\frac{6 \log d}{\pi^2}}-b_d}{a_d} \sim Gumbel(0,1)$.
Therefore, $\mathrm{Ex}[d_{\rm tr}(x,-w)] = \mathrm{Ex}[2\max\{x+w\}] \xrightarrow[d\to\infty]{} 2 \sqrt{\frac{6 \log d}{\pi^2}}( a_d \mathrm{Ex}[Z] + b_d )$.
$\mathrm{Var}[d_{\rm tr}(x,-w)] \xrightarrow[d\to\infty]{} 2 \times \frac{6 \log d}{\pi^2} a_d^2 \mathrm{Var}[Z] = 2 \frac{6 \log d}{\pi^2} a_d^2 \frac{\pi^2}{6} = 1$.
\end{proof}

To control the standard deviation of the weights, you can customize an initializer (instead of simply specifying initializer $=$ "random\_normal") in the definition of TropEmbed class.
\begin{lstlisting}[basicstyle=\ttfamily\footnotesize]
> ini = tf.keras.initializers.RandomNormal(mean=0., stddev=1.)
> self.w = self.add_weight(shape=(units, input_dim), \
      initializer=ini)
\end{lstlisting}
However, as the weght initialization should be done together with the data preprocessing, in this paper we entirely use the default value of stddev=0.05 for "random\_normal" for simplicity.
\begin{lstlisting}[basicstyle=\ttfamily\footnotesize]
> self.w = self.add_weight(shape=(units, input_dim), \
      initializer="random_normal")
\end{lstlisting}

\section{Computational Experiments}

In this section, we apply tropical neural networks with one hidden layer (the tropical embedded layer) to simulated data as well as empirical data.  Then later we compare its performance against neural networks with one hidden layer with ReLU activator.  

\subsection{Small simulated data}
First we illustrate our tropical neural networks with one hidden layer with 16 neurons and with one output Sigmoid function using a small example.
First we generate two dimensional $16+16$ random points from the Gaussian distributions with means $(0.5, -0.5)$ and $(-0.5, 0.5)$ with the unit covariance matrix.
Then these points are randomly translated by $(c, c)$ where $c$ is a Gaussian random variable whose standard deviation is 4.
The left and right figures in Figure \ref{fig:predPro} show the actual test labels and the predicted probabilities of the test data by the tropical neural networks with one hidden layer with 16 neurons.

\begin{figure}[h!]
    \centering
    \includegraphics[width=1\textwidth]{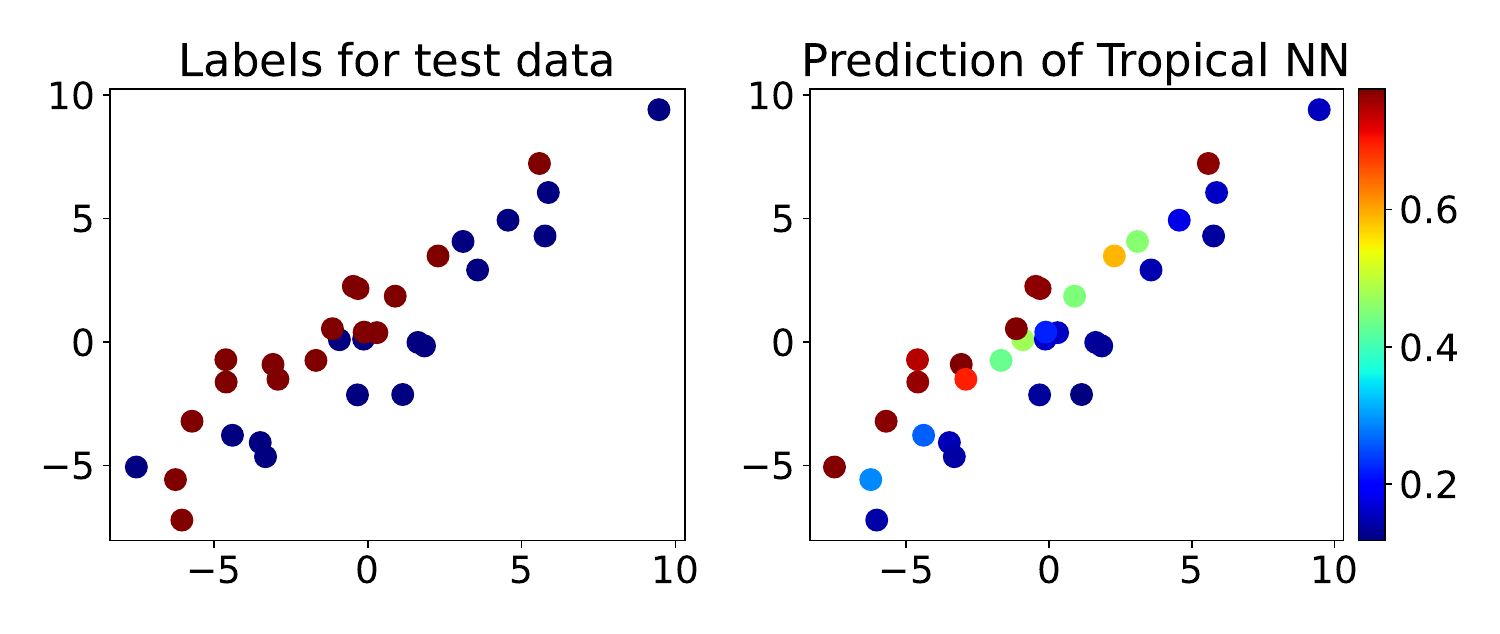}
    \caption{Predicted probabilities by the tropical neural networks on a small example.}
    \label{fig:predPro}
\end{figure}


\subsection{High-dimensional simulated data}
Second we demonstrate that a tropical neural network with one hidden layer with 8 neurons and with one output sigmoid function works against the curse of dimensionality, where the most of the variables in this high dimensional data are rather just noises \cite{YTMM}.
We generate $d$ dimensional $16+16$ random points from the Gaussian distributions with means $(0.5, -0.5, 0, \ldots, 0)$ and $(-0.5, 0.5, 0, \ldots, 0)$ with the unit covariance matrix.
Then these points are randomly translated by $(c, c, c, \ldots, c)$ where $c$ is a Gaussian random variable whose standard deviation is 6.
The result demonstrates that the tropical neural networks work robustly against the curse of dimensionality.

\begin{figure}[h!]
    \centering
    \includegraphics[width=0.6\textwidth]{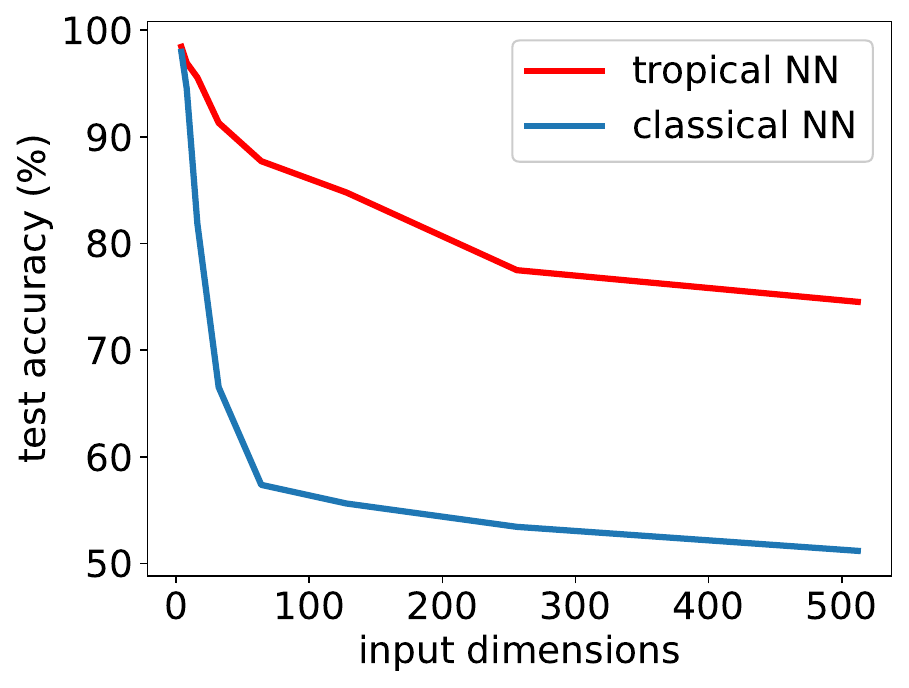}
    \caption{Application of the tropical neural networks to a high-dimensional example. The test accuracy averaged over 100 trials is plotted. The tropical neural networks work robustly against the curse of dimensionality.}
    \label{fig:predPro2}
\end{figure}

\subsection{simulated data generated from the multi-species coalescent model}

In this subsection we apply the tropical neural networks to a sample of phylogenetic trees generated under the {\em multi-species coalescent model}. 

A phylogenetic tree is a weighted tree whose leaves are labeled with $[m]:=\{1, 2, \ldots , m\}$, where $m$ is the number of leaves, and whose internal nodes are unlabeled. A weight on each edge of a phylogenetic tree is considered as a distance from a node to another node on the tree and in evolutionary biology, a weight on an edge can be considered as a product of an evolutionary time and mutationa rate \cite{SS2}. On this paper we consider a rooted phylogenetic tree with a leaf label set $[m]$.  An equidistant tree with $m$ leaves is called {\em equidistant tree} if the total weights on the unique path from its room to each leaf is the same for all leaf in $[m]$. Under the multi-species coalescent model which can be used to analyze gene trees, which are phylogenetic trees reconstructed from genes in a genome, we assume that gene trees are all equidistant trees.  Therefore in this paper we assume that all phylogenetic trees are equidistant trees.  

To conduct a statistical analysis on a set of phylogenetic trees, we consider a {\em space of phylogenetic trees} with fixed $[m]$. A space of phylogenetic trees on $[m]$ is a set of all possible phylogenetic trees with $[m]$ and it is well-known that it is not Euclidean \cite{SS}. It is also well-known that the space of all possible equidistant trees on $[m]$ with the {\em tropical metric} under the max-plus algebra is a subspace of the tropical projective space \cite{AK,YZZ}.  In order to define the space of equidistant trees, first we define {\em ultrametrics}.  COnsider a map $u: [m] \times [m] \to \mathbb{R}$ such that $u(i, j) = u(j, i)$ and $u(i, i) = 0$.  This map is called a {\em dissimilarity map} on $[m]$.  If a dissimilarity map $u$ satisfies that
\[
\max\{u(i, j), u(i, k), u(j, k)\}
\]
achieve at least twice, then we call $u$ an {\em ultrametric}.  
\begin{example}\label{ex:ultra}
    Suppose $m = 3$, i.e., $[3] = \{1, 2, 3\}$ and suppose 
    \[
    u(1, 2) = u(2, 1) =  1, u(1, 3) = u(3, 1) = 1, u(2, 3) = u(3, 2) = 0.5
    \]
    and $u(i, i) = 0$ for all $i = 1, 2, 3$.  Since 
    \[
    \max\{u(1, 2), u(1, 3), u(2, 3)\} = 1
    \]
    and it achieves twice, i.e., $u(1, 2) = u(1, 3) = 1$.
    Thus, $u$ is an ultrametric.  
\end{example}
Consider dissimilarity maps $u_T$ on a phylogenetic tree $T$ on $[m]$ such that $u(i, j)$ is the total weights on the unique path from a leaf $i$ to a leaf $j$ for all $i, j \in [m]$.  Then we have the following theorem:
\begin{theorem}[\cite{Buneman}]
    Consider an equidistant tree $T$ on $[m]$.  Then $u_T$ realizes an equidistant tree $T$ on $[m]$ if and only if a dissimilarity map $u_T$ is ultrametric.
\end{theorem}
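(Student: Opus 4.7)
The plan is to prove both implications of the equivalence separately, with the forward direction following from the geometry of the tree and the reverse from an explicit construction via threshold equivalence classes.

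For the forward direction, I would fix any three leaves $i,j,k \in [m]$ and analyze the induced subtree. In a rooted equidistant tree, the three leaves have a most recent common ancestor (MRCA), and exactly two of them share a strictly more recent MRCA than the third (generically). Label these so that $i$ and $j$ meet at node $a$, and the pair $\{i,j\}$ meets $k$ at node $b$ lying on the path from $a$ to the root. By equidistance, the distance from any internal node $v$ to all leaves below $v$ is a common value $h(v)$, and $h(a) \le h(b)$. Then
\begin{equation*}
u_T(i,j) = 2h(a), \quad u_T(i,k) = u_T(j,k) = 2h(b),
\end{equation*}
so $\max\{u_T(i,j),u_T(i,k),u_T(j,k)\} = 2h(b)$ is attained at least twice. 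The degenerate case $h(a)=h(b)$ gives a triple equality, which is also fine.

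For the reverse direction, given an ultrametric $u$ on $[m]$, I would construct $T$ by hierarchical merging. For each threshold $t \ge 0$, define the relation $i \sim_t j$ iff $u(i,j) \le 2t$. Symmetry and reflexivity are immediate; transitivity is where the three-point ultrametric condition enters: if $u(i,j) \le 2t$ and $u(j,k) \le 2t$, then since $\max\{u(i,j),u(i,k),u(j,k)\}$ is attained at least twice, one checks that $u(i,k) \le \max\{u(i,j),u(j,k)\} \le 2t$. Hence $\sim_t$ is an equivalence relation for every $t$, and the classes only coarsen as $t$ increases. I would then build $T$ by making every equivalence class at threshold $t$ correspond to an internal node at height $t$ above the leaves, connecting each class at threshold $t$ to the classes it contains at thresholds $t'<t$. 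By construction every root-to-leaf path has the same total length, so $T$ is equidistant.

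It remains to verify $u_T = u$. For any pair $i,j$, the smallest threshold $t$ at which $i \sim_t j$ is $u(i,j)/2$, which is the height of their MRCA in $T$, giving $u_T(i,j) = 2 \cdot u(i,j)/2 = u(i,j)$. The main obstacle is the reverse direction, specifically confirming that $\sim_t$ is genuinely an equivalence relation (the transitivity step really is the content of Buneman's theorem) and that the merging construction produces a well-defined tree in which every MRCA sits at the correct height; the forward direction is essentially a bookkeeping argument once the MRCA picture is set up.
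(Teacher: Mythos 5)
The paper does not prove this statement at all: it is quoted as a classical result and attributed to Buneman, so there is no in-paper argument to compare against. Your proof is the standard one and it is correct. The forward direction correctly reduces to the observation that in an equidistant tree the two deeper-meeting leaves of any triple are equally far from the third, so the maximum of the three pairwise distances is attained at least twice; the reverse direction is the usual single-linkage construction, where the only real content is that the paper's ``max attained at least twice'' condition is equivalent to the strong triangle inequality $u(i,k) \le \max\{u(i,j),u(j,k)\}$, which is exactly what makes $\sim_t$ transitive and the threshold classes a laminar family. Two minor points you leave implicit but should state: you need $u \ge 0$ (and $u(i,j)>0$ for $i \ne j$ if leaves are to be distinct vertices) for the heights $u(i,j)/2$ and the resulting edge lengths to be nonnegative, and only the finitely many thresholds $\{u(i,j)/2\}$ produce new classes, so the construction yields a finite tree. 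Neither affects the validity of the argument.
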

\begin{example}
    Suppose we have an ultrametric from Example \ref{ex:ultra}.  An equidistant tree whose dissimilarity maps are ultrametric in Example \ref{ex:ultra} is a rooted phylogenetic tree with leaves $[3]=\{1, 2, 3\}$ shown in Figure \ref{fig:ultra}.
    \begin{figure}[h!]
        \centering
        \includegraphics[width=0.7\textwidth]{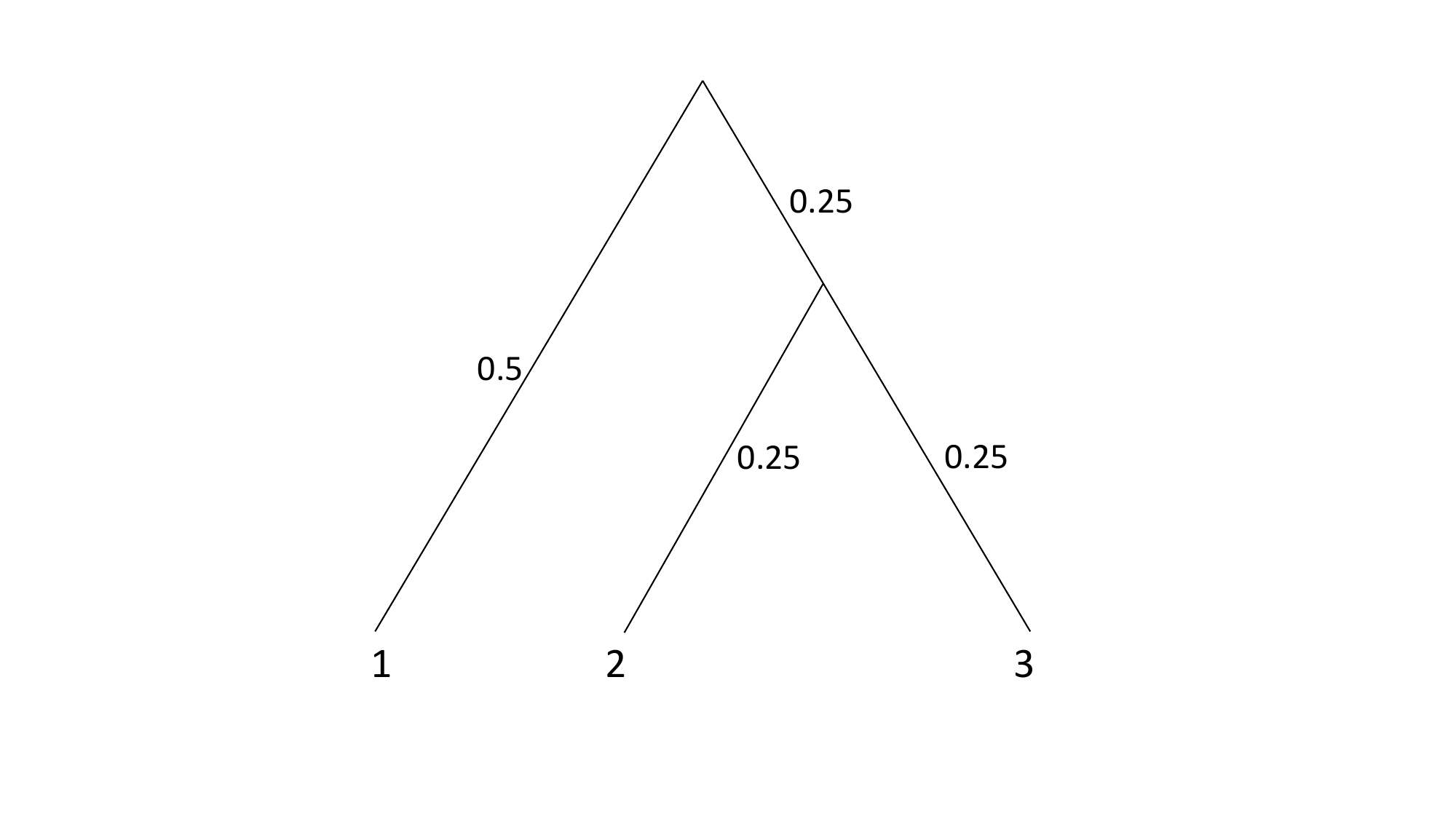}
        \caption{An equidistant tree with the dissimilarity maps which are ultrametric shown in Example \ref{ex:ultra}.}
        \label{fig:ultra}
    \end{figure}
\end{example}

Therefore, we consider the space of all possible ultrametrics on $[m]$ as the space of equidistant trees on $[m]$.  
Then we have the following theorem:
\begin{theorem}[\cite{AK}]
    The space of ultrametrics on $[m]$ is the tropicalization of the linear subspace defined by linear equations 
    \[
    x_{ij}-x_{ik}+x_{jk}=0
    \]
    for $1 \leq i \leq j \leq k \leq m$ by replacing sum with max operation and replacing multiplication by a classical summation.  
\end{theorem}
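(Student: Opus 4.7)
The plan is to establish the equivalence directly from the definitions, once the tropicalization is correctly interpreted. The tropical polynomial associated to the linear form $x_{ij}-x_{ik}+x_{jk}$ (replacing classical addition by $\max$ and classical multiplication by addition, as the statement prescribes) is the three-term tropical linear form
\[
\mathrm{trop}(x_{ij}-x_{ik}+x_{jk}) \;=\; \max\bigl\{x_{ij},\,x_{ik},\,x_{jk}\bigr\},
\]
and its tropical hypersurface is by definition the locus in $\mathbb{R}^{\binom{m}{2}}$ where this maximum is attained at least twice. Thus the tropicalization of the linear subspace cut out by all the equations $x_{ij}-x_{ik}+x_{jk}=0$, taken as the intersection of the associated tropical hypersurfaces over all triples $1\le i<j<k\le m$, equals the set of $x\in\mathbb{R}^{\binom{m}{2}}$ such that for every triple $\{i,j,k\}$ the three coordinates $x_{ij},x_{ik},x_{jk}$ have their maximum attained at least twice.

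The first step is the forward direction. Given an ultrametric $u$ on $[m]$, by the very definition of ultrametric (the three-point condition recalled just before Example \ref{ex:ultra}), for every triple $\{i,j,k\}\subseteq[m]$ the maximum of $u(i,j),u(i,k),u(j,k)$ is attained at least twice. Viewing $u$ as the point $x\in\mathbb{R}^{\binom{m}{2}}$ with $x_{ij}=u(i,j)$, this says exactly that $x$ lies in the tropical hypersurface of $x_{ij}-x_{ik}+x_{jk}$ for every $i<j<k$, hence in the tropicalization of the stated linear subspace. The reverse direction is entirely symmetric: if $x$ lies in the intersection of all such tropical hypersurfaces, then for each triple the maximum is attained at least twice, which is precisely the three-point ultrametric condition, so the dissimilarity map $u(i,j):=x_{ij}$ is an ultrametric and therefore realizes an equidistant tree on $[m]$ by the Buneman-type theorem cited above.

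The only non-routine point is the compatibility between two possible meanings of ``tropicalization'' for a linear subspace: the tropicalization of the ideal (a la Kapranov) versus the intersection of the tropical hypersurfaces of a given set of linear generators. For general ideals these can differ, and the hard part of the argument would be verifying that the $\binom{m}{3}$ three-term relations $x_{ij}-x_{ik}+x_{jk}=0$ form a tropical basis, so that their naive intersection coincides with the tropicalization of the cut-out variety. I would handle this by invoking the Ardila--Klivans structure theorem \cite{AK}, which identifies this intersection as the Bergman fan of an appropriate matroid and shows it is exactly the space of ultrametrics; combined with the elementary forward/backward implications above, this yields the claim.
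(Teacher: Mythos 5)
The paper does not actually prove this statement; it is quoted as a known result and attributed entirely to \cite{AK}, so there is no internal proof to compare against. Your sketch is sound as far as it goes: the identification of the tropicalized relation with ``$\max\{x_{ij},x_{ik},x_{jk}\}$ attained at least twice'' is the standard reading, and the forward/backward equivalence with the three-point ultrametric condition is indeed purely definitional. You are also right that the entire mathematical content lies in the point you flag at the end --- that the $\binom{m}{3}$ triangle relations form a tropical basis for the linear space they cut out, i.e.\ that the naive intersection of their tropical hypersurfaces coincides with the tropicalization of the variety (equivalently, with the Bergman fan of the graphic matroid of $K_m$). Since you resolve that step only by invoking \cite{AK}, which is the very source to which the theorem is attributed, your argument is not an independent proof of the hard part; but this is exactly the level at which the paper itself treats the result, so there is no gap relative to the paper. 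If you wanted a self-contained argument, the missing piece would be to show directly that the three-point condition on all triples forces the ``max attained twice'' condition on every circuit (every cycle of $K_m$, not just triangles), which is a short induction on cycle length.
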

The space of ultrametrics on $[m]$ is a subspace of the tropical projective space $(\mathbb{R}^e\cup \{-\infty\}) / \mathbb{R} {\bf 1}$ where $e = \binom{m}{2}$. 
Therefore, we apply our method, tropical neural networks, to simulated data generated from the multi-species coalescent model using a software Mesquite \cite{mesquite}.  

The multi-species coalescent model has two parameters: species depth and effective population size. Here we fix the effective population size $N_e=100000$ and we vary
\[
R = \frac{SD}{N_e}
\]
where $SD$ is the species depth. We generate species trees using the Yule process.  Then we use the multi-species coalescent model to generate gene trees with a given species tree.  In this experiment, for each $R$, we generate two different set of $1000$ gene trees:  In each set of gene tree, we have a different species tree so that each set of gene trees is different from the other.  We conduct experiments with $R = 0.25, 0.5, 1, 2, 5, 10$.

Note that when we have a small ratio $R$, then gene trees become more like random trees since the species tree constrains less on tree topologies of gene trees.  Thus it is more difficult when we have a small $R$ while if we have a large $R$, then it is easier to classify since the species tree constrains more on tree topologies of gene trees.   

\begin{figure}[h!]
    \centering
    \includegraphics[width=\textwidth]{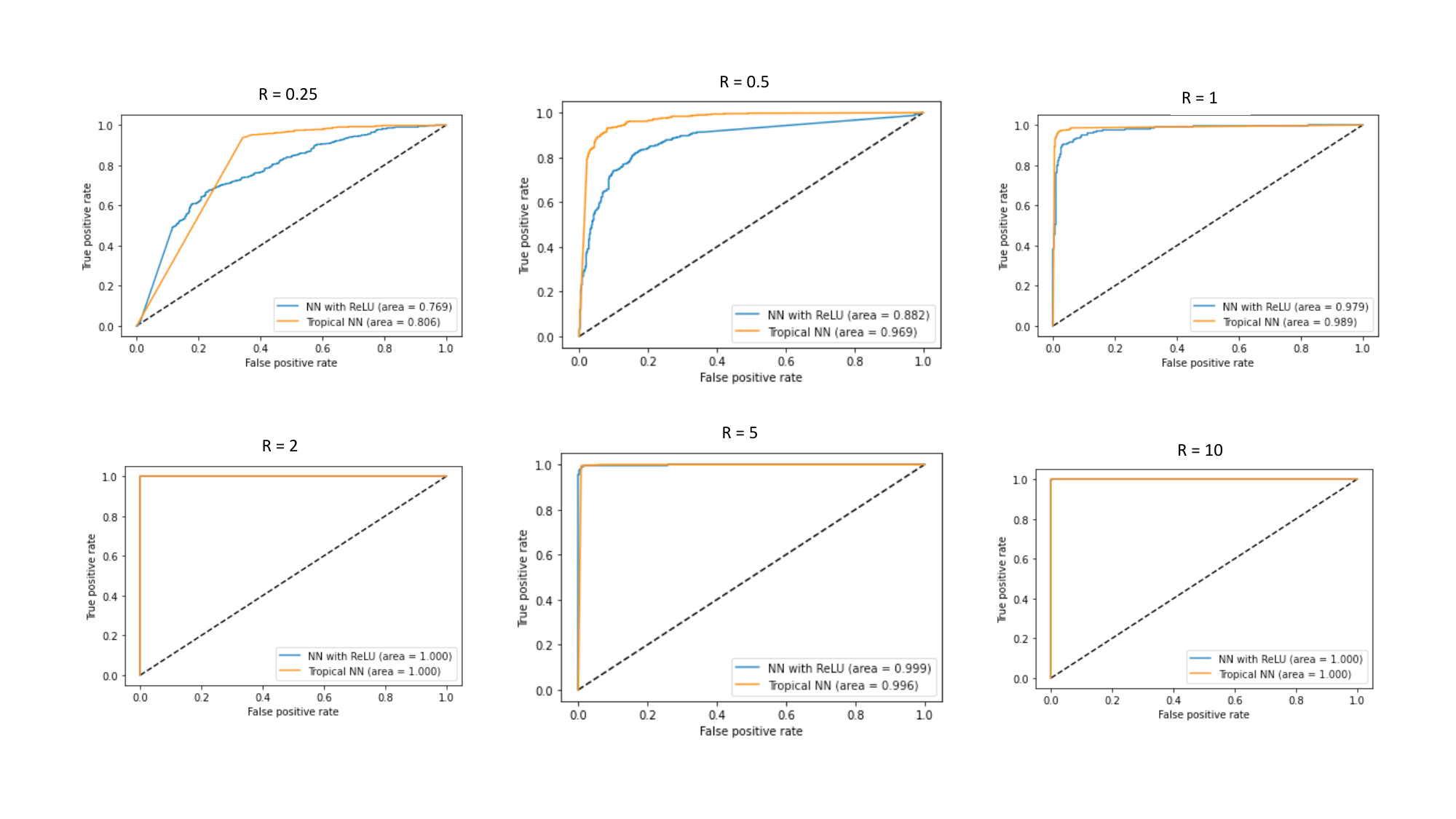}
    \vskip -0.3in 
    \caption{ROC curves for neural networks with ReLU and tropical neural networks with one hidden layer.  We conduct experiments with $R = 0.25, 0.5, 1, 2, 5, 10$.}
    \label{fig:roc}
\end{figure}

In this experiment, we set one hidden layer for each neural network: neural network with ReLU activators and neural network with tropical activators.  We set the Sigmoid function in the output node in both neural networks.  In each neural network, we set $1000$ neurons in the hidden layer.  

Figure \ref{fig:roc} shows ROC curves for neural networks with ReLU and tropical neural networks.  In general tropical neural networks perform better than neural networks with ReLU activator function.  

\subsection{Influenza data}
In this subsection we apply our method to genomic data for 1089 full length sequences of hemagglutinin (HA) for influenza A H3N2 from 1993 to 2017 collected in the state of New York obtained from the GI-SAID EpiFlu database (www.gisaid.org). These collected data were aligned using muscle developed by \cite{muscle} with the default settings. Then we apply the neighbor-joining method with the p-distance \cite{NJ} to reconstruct a tree from each sequenced data.  Each year corresponds to the first season. We also apply KDETrees \cite{KDETree} to remove outliers and a sample size of each year is about 20,000 trees. 

We apply tropical neural networks and neural networks with ReLU with one hidden layer with 10 neurons to all pairs of different years to see if they are significantly different one year to the other.
Heatmaps of accuracy rates with the 
probability threshold $0.5$ 
and AUC values
are shown in Figure \ref{fig:flu}.  
Again, tropical neural networks outperform classical neural networks.

\begin{figure}[h!]
    \centering
    \includegraphics[width=\textwidth]{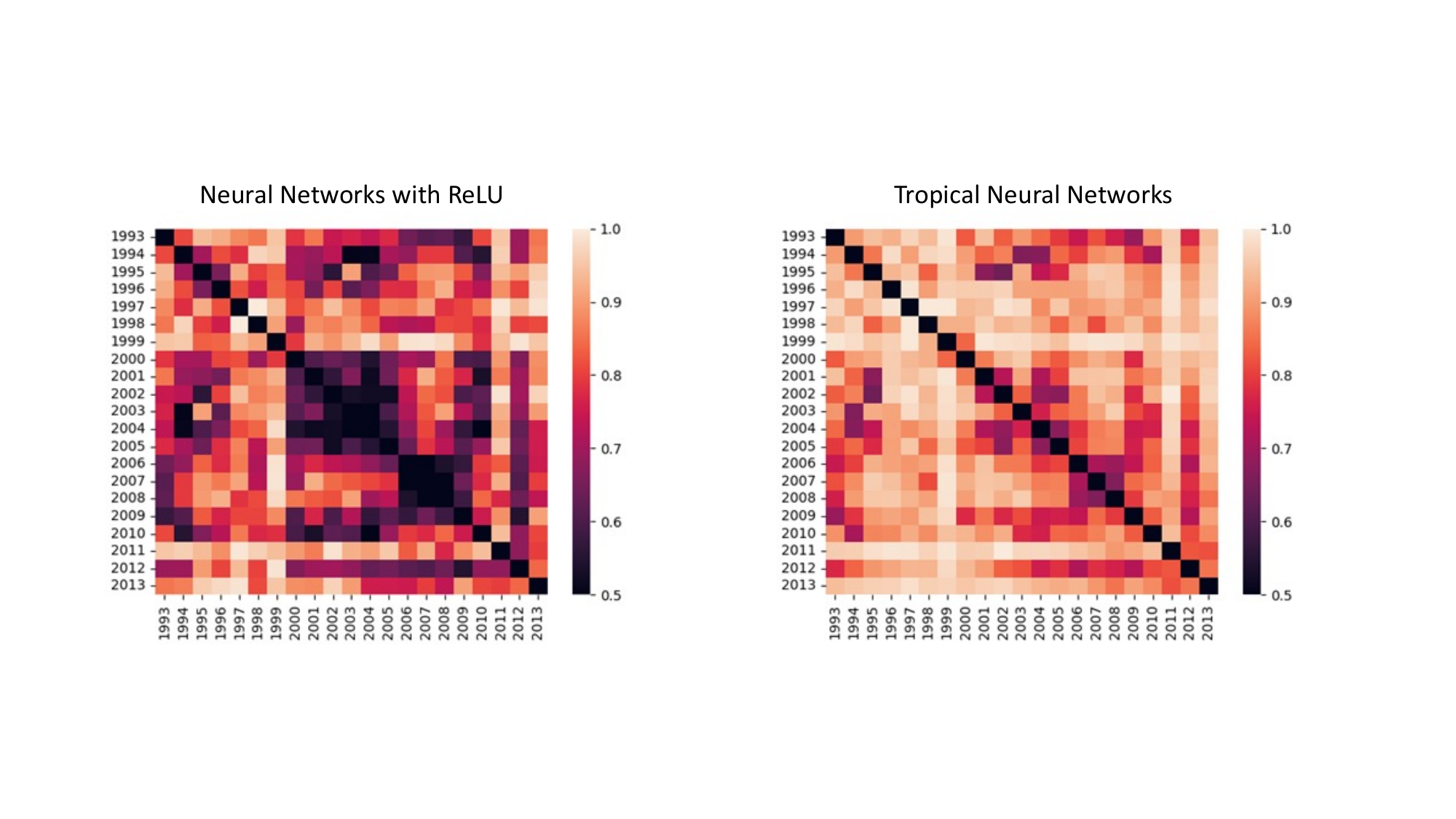}
    \begin{minipage}{.45\textwidth}
      \centering
    \includegraphics[width=\linewidth]{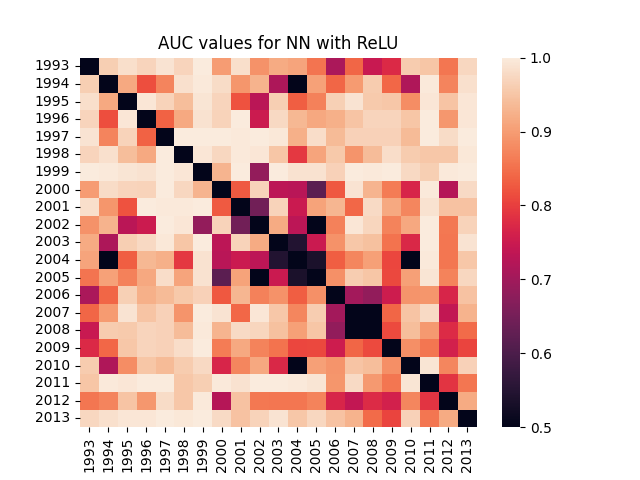}
      \label{fig:test1}
    \end{minipage}%
    \begin{minipage}{.45\textwidth}
      \centering
      \includegraphics[width=\linewidth]{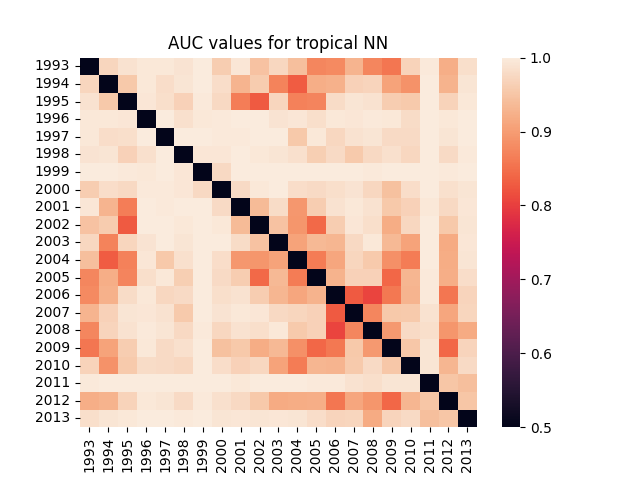}
      \label{fig:test2}
    \end{minipage}
    \caption{Heat maps for (top) classification rates 
    with threshold $0.5$
    and (bottom) AUC values 
    for classical neural networks with ReLU (left)
    and tropical neural networks (right). }
    \label{fig:flu}
\end{figure}



\subsection{Tropical Neural Network as a Generalization of Tropical Logistic Regression for Classification of Gene Trees}
A tropical neural network can be interpreted as a generalization of a tropical logistic regression.
The tropical logistic regression model
\cite{aliatimis2023tropical} is developed
for binary classification on gene trees and it has been 
proven to have higher predictive power 
than classical logistic regression to classify phylogenetic trees. 
It assumes that if $X$ is the covariate 
(gene tree), the binary response random 
variable is
$Y \sim \text{Bernoulli}(p(X))$, with 
\begin{equation*}
    p(x) = S\left(
        \lambda_0 d_{\rm tr}(x,\omega_0) 
        - \lambda_1 d_{\rm tr}(x,\omega_1) +
        C
    \right),
\end{equation*}
where $m$ is the number of leaves in each phylogenetic tree in the given sample, $e:= \binom{m}{2}$, $S$ is the sigmoid function, 
$\omega_0, \, \omega_1 \in 
\mathbb{R}^e/\mathbb{R}{\bf 1}$, and
$\lambda_0,\lambda_1,C \in \R$ with
$\lambda_0 \lambda_1 \geq 0$.
Note that this model is a special case for 
Eq.~\eqref{eq:response}, 
with the sigmoid function as the link, and
two neurons in the hidden
layer whose weights are 
$w_0^{(2)} = \lambda_0, 
w_1^{(2)}= -\lambda_1$ and
$w_0^{(1)} = -\omega_0, w_1^{(1)} = -\omega_1$.
Therefore, tropical logistic regression is 
almost identical to a tropical neural network
consisting of 
one tropical embedding layer with two neurons
and a classical layer, with the additional 
assumption that 
$w_0^{(2)} w_1^{(2)} \leq 0$. 

The one-species model described in \cite{aliatimis2023tropical} can be considered to be a neural network with $e$ neurons in the input layer, no hidden layers and a unique output neuron.
The activation function is the logistic function and the inner product used is tropical defined as
\begin{equation}
    \label{eq:tropical_inner_product}
    \langle x, -\omega \rangle :=
    d_{\rm tr}(x,\omega) - C,
\end{equation}
where $C$ can be considered to be a bias variable, 
similarly to the intercept variable in classical 
models.
Tropical logistic regression returns the sigmoid of the tropical inner product. We define the tropical generalised linear model as an extension to tropical logistic regression, where instead of the sigmoid function, we may use a different link/activation function.
If there are multiple outputs (multivariate generalized linear model (GLM))
and if we treat the output
layer as the new input layer and iterate this $L$ 
times, then we have an $L$-layer neural network.
In the same way that classical neural networks are 
a stack/recursive application of classical multivariate GLMs, 
tropical neural networks can be 
a stack of tropical multivariate GLMs. Effectively, all is 
identical to classical networks, but instead of 
applying classical inner products, we apply 
tropical inner products as defined in 
Eq.~\eqref{eq:tropical_inner_product}.
The $i$-th neuron of the $l$-th layer is defined as 
$x^{(l)}_i$ and computed through the recursive formula, 
\begin{equation}
    \label{eq:neuron_value_recursive}
    x^{(l)}_i = d_{\rm tr}
    \left(x^{(l-1)}_i, \omega^{(l)}_i \right) 
    - C^{(l)}_i,
\end{equation}
where 
$\Omega^{(l)} = (\omega^{(l)}_1, \omega^{(l)}_2, \dots,  \omega^{(l)}_{N_l}) \in \mathbb{R}^{ N_{l-1} \times N_l}$ is the weight matrix between layer $(l-1)$ and $l$
for the number $N_s$ of neurons in layer $s$,
and $C^{(l)} \in \mathbb{R}^{N_{l}}$. 
By assuming that all neurons share the same 
bias variable $c = C^{(l)}_i$ for all 
$i \in [N_l]$, 
Eq.~\eqref{eq:neuron_value_recursive}
reduces to  
Eq.~\eqref{eq_activation_function}, since 
vectors are defined up to an additive 
constant vector $(c,\dots,c)$ 
in the tropical projective torus.
When the last tropical embedding layer 
connects to the first classical layer, 
the constant bias vector is incorporated in the 
bias term of the classical layer.
Hence, tropical bias terms are redundant and
not considered in the development of 
tropical neural networks.
Thus, the tropical neural network which we propose in
this paper follows naturally as an extension 
of the tropical logistic regression model.

\section{Summary and Discussion}
In this paper, we first developed a tropical embedding layer.
We used the tropical embedding layer as the first layer of the classical neural networks to keep the invariance in the tropical projective space.
To check if this tropical neural network has enough flexibility, we next proved that this tropical neural network is a universal approximator.
After we derived a backpropagation rule for the tropical neural networks, we provided TensorFlow 2 codes for implementing a tropical neural network in the same fashion as the classical one, where the weights initialization problem is considered according to the extreme value statistics.
Finally we showed some applications as examples.

The tropical neural networks with the tropical metric worked better than the classical neural networks when the input data are phylogenetic trees which is included in the tropical projective torus. This is partly because only the tropical neural network can keep the invariance of the input vector under the one-vector which is innate in the tropical projective torus.

One of the nice properties of tropical neural networks is its tractability and interpretability in analysis.
The tropical embedding can be interpreted as taking the tropical distance to a point in the space of the tropical projective torus.
The activities of neurons in the tropical neural networks with the randomized weights and inputs can be analyzed by using the extreme value statistics.
The backpropagation rule of the tropical neural networks can be derived and interpreted rather easily.

The TensorFlow 2 codes for the Python class for tropical embedding was provided in the paper.
This makes it possible to implement a tropical neural network in the same familiar fashion as the classical one.
This facilitates, for example, to compare tropical and classical neural networks for the same data by using a common code.

Recent work shows that neural networks are vulnerable against adversarial attacks (for example, \cite{10.1007/978-3-642-40994-3_25,journals/corr/NguyenYC14,DBLP:journals/corr/SzegedyZSBEGF13,DBLP:conf/iclr/MadryMSTV18}).  However, our initial computational experiments on image data from computer vision show that tropical neural networks are robust against gradient based methods, such as the Fast Gradient Sign Method \cite{goodfellow2014explaining} and Ensemble Adversarial Training \cite{DBLP:conf/iclr/TramerKPGBM18}. It is interesting to investigate why tropical neural networks are robust against such attacks.  In addition, it is interesting to develop adversarial attacks toward tropical neural networks.

\bibliographystyle{plain} 
\bibliography{refs}

\end{document}